\documentclass{scrartcl}
\usepackage[T1]{fontenc}
\usepackage{lmodern}

\usepackage{authblk}

\usepackage{amsmath}
\usepackage{amssymb}
\usepackage{amsthm}
\usepackage{mathtools}
\usepackage{enumitem}
\usepackage{xcolor}
\usepackage{quiver}

\usepackage{microtype}

\usepackage{hyperref}
\setkomafont{author}{\normalsize}
% \newcommand{\SortAs}[1]{} % For bibliography sorting
 % For bibliography sorting

\newtheorem{theorem}{Theorem}
\newtheorem{lemma}{Lemma}
\newtheorem{definition}{Definition}

\title{Extremal Fitting CQs do not Generalize}
\author[1]{Balder ten Cate}
\affil[1]{ILLC, University of Amsterdam}
\author[2]{Maurice Funk}
\author[3]{Jean Christoph Jung}
\affil[3]{TU Dortmund University}
\author[2]{Carsten Lutz}
\affil[2]{Leipzig University and ScaDS.AI}

\date{}

\begin{document}

\newcommand{\Amf}{\mathbf{A}}
\newcommand{\Dmc}{\mathcal{D}}
\newcommand{\Amc}{\mathcal{A}}
\newcommand{\Bmc}{\mathcal{B}}
\newcommand{\Smc}{\mathcal{S}}
\newcommand{\mn}[1]{\ensuremath{\mathsf{#1}}}

\maketitle{}

\begin{abstract}
  A fitting algorithm for conjunctive queries (CQs) produces, given a set of   
  positively and negatively labeled data examples, a CQ that fits 
  these examples. In general, there may be many non-equivalent fitting CQs and
  thus the algorithm has some freedom in producing its output. Additional
  desirable properties of the produced CQ are that it generalizes well to
  unseen examples in the sense of PAC learning and that it is most general
  or most specific in the set of all fitting CQs. In this research note, we
  show that these desiderata are incompatible when we require 
  PAC-style generalization from a \emph{polynomial} sample: we
  prove that any fitting algorithm
  that produces a most-specific fitting CQ cannot be a sample-efficient PAC
  learning algorithm, and the same is true for fitting algorithms that
  produce a most-general fitting CQ (when it exists). Our proofs
  rely on a polynomial construction of relativized homomorphism 
 dualities for path-shaped structures.
%
 % Conjunctive queries (CQs) are known to be learnable in the
 % Probably-Approximately-Correct (PAC) model with polynomial sample size.
 % \textcolor{red}{This sentence is confusing. It can be misunderstood as claiming polynomial-time learnability (without membership oracle), which is false. It is true 
 % for non-polynomial-time learnability but 
 % that is a general fact that holds of all
 % (countable) concept classes, and is not 
 % specific to CQs.}
 % Recently, ten~Cate~el al.\ introduced the concept of extremal fitting queries, these are
 % queries that fit a given set of examples and describe the entire space of
 % fitting queries, by being maximal/minimal elements.
 %% This property makes extremal fittings useful for query reverse engineering
  %tasks. 
 % Unfortunately, it is known that for the class of tree CQs, sample-efficient
  %PAC learning and extremal fittings cannot be combined: every extremal fitting
 % algorithm requires an exponential sample size to be PAC.
%
%  This self contained note generalizes this result from tree CQs to the full
%  class of CQs.
\end{abstract}

\section{Introduction}

The classic \emph{query-by-example} paradigm~\cite{zloof1975query}, also known as
\emph{query reverse engineering}, aims to assist users in query formation. At its heart lies the \emph{fitting problem}, which is the task to
compute a query that fits a given set of (positively and negatively) labeled data examples. The fitting problem has been
intensively studied for the  class of conjunctive queries, see for example
(CQs)~\cite{Tran2014:reverse,Li2015:qfe,Barcelo017}.

An algorithm that solves this task, that is, takes as input a set of labeled
data examples and returns a CQ that fits the examples, is called a fitting algorithm.
As most sets of examples have multiple fitting CQs, the returned CQ depends on
the chosen fitting algorithm.
If one desires the fitting CQ to have additional desirable properties, one might
thus choose a fitting algorithm that is tailored towards this purpose. 
See \cite{sigmod-to-appear} for an overview.

In many query reverse engineering scenarios, one wants a fitting algorithm that
produces a CQ that %not only fits the
%input examples, but also 
correctly labels future examples, provided that they are
drawn from the same distribution as the examples given as an input. This property is formalized by Valiant's
framework of probably approximately correct (PAC) learning~\cite{Valiant84:pac}. Given
sufficiently many examples sampled from an unknown distribution, a PAC learning
algorithm is required to return  with high probability $1 - \delta$ a CQ that has small
classification error $\epsilon$, for $\delta,\epsilon \in (0,1)$.
Additionally, it is desirable that the number of required examples only grows
polynomially with the parameters  $\delta,\epsilon$. If this is the case, we say that the PAC
learning algorithm is \emph{sample-efficient}.

%It is well known that PAC learning is closely connected to Occam algorithms,
%which are fitting algorithms that find hypotheses of small
%size~\cite{DBLP:journals/jacm/BlumerEHW89}. In practice this means that fitting
%algorithms that produce a fitting CQ of minimal size are sample-efficient PAC
%learning algorithms. \textcolor{red}{To simplify things, I would consider dropping all %references to Occam/minimal-size fitting from the introduction as it is not really relevant to %the main question and result of this note.}

From a logical perspective, extremal fitting CQs are of particular interest.
These are fitting CQs that are most general or most specific in the class
of all fitting CQs. When they exist,  extremal fitting CQs describe the entire
(potentially infinite) set of fitting CQs and thus they provide a useful basis
to understand this set. Extremal fitting CQs have recently been studied in
detail in \cite{pods2023:extremal}. 

A natural question is how the desiderata of PAC learning and of extremality
relate. In particular, one may ask whether there is a fitting algorithm that 
is both a sample efficient PAC learning algorithm and produces a most-specific or most-general
fitting CQ, when it exists. This question was answered to the negative in \cite{IJCAI23:SAT} for 
a class of tree-shaped
CQs, there called $\mathcal{EL}$ queries, both for most-specific and for most-general fittings.\footnote{We remark that this is not always the case. In the case of monomials
(conjunctions of propositional literals), for instance, fitting algorithms that produce the shortest fitting monomial
are sample-efficient PAC learning algorithms~\cite{anthonyComputationalLearningTheory1992}
and shortest fitting monomials are also most general fittings.}
The aim of this note is to extend these results to unrestricted CQs and to present them in a
self-contained way. More specifically, we prove the following:
\begin{enumerate}

  \item Any fitting algorithm for CQs that always produces a most-general fitting CQ, when it exists, is not a sample-efficient PAC learning algorithm.

  \item Any fitting algorithm for CQs that always produces a most-specific fitting CQ is not a sample-efficient PAC learning algorithm.

\end{enumerate}

\section{Preliminaries}
A schema $\Smc$ is a set of relation symbols, each with associated arity.
A \emph{database instance} over $\Smc$ is a finite set $I$ of
\emph{facts} of the form $R(a_1,\dots,a_n)$ where $R \in \Smc$ is a
relation symbol of arity $n$ and $a_1,\dots,a_n$ are \emph{values}.
We use $\mn{adom}(I)$ to denote the set of all values used in~$I$.

A $k$-ary \emph{conjunctive query} (CQ) over a schema $\Smc$ for $k \geq 0$, is
an expression of the form
\[
  q(\textbf{x}) \text{ :- } \alpha_1,\ldots,\alpha_n
\]
where $\textbf{x}=x_1, \ldots, x_k$ is a sequence of variables and each
$\alpha_i$ is a relational atom that uses a relation symbol from $\mathcal{S}$
and no constants. 
The variables in $\textbf{x}$ are called \emph{answer variables} and
the other variables used in the atoms~$\alpha_i$ are the \emph{existential variables}.
We denote by $q(I)$ the set of all $k$-tuples $\mathbf{a} \in \mn{adom}(I)^k$ such that $I \models
q(\mathbf{a})$.
With the \emph{size} $||q||$ of a CQ~$q$, we mean
the number of atoms in it.
For CQs $q_1$ and $q_2$ of the same arity and over the same schema
$\Smc$, we write $q_1 \subseteq q_2$ and say that $q_1$ is
\emph{contained} in $q_2$ if $q_1(I) \subseteq q_2(I)$ for all database
instances $I$ over $\Smc$. 
If $q_1 \subseteq q_2$ and $q_2 \subseteq q_1$, then we call $q_1$ and $q_2$
\emph{equivalent} and write $q_1 \equiv q_2$.

A \emph{data example} $(I, \mathbf{a})$ for a $k$-ary CQ $q$ consists of a database instance $I$
together with a tuple $\mathbf{a} \in \mn{adom}(I)^k$.
A data example is
\begin{itemize}
  \item a \emph{positive example} for $q$ if $\textbf{a}\in q(I)$; 
  \item a \emph{negative example} for $q$ if $\textbf{a}\notin q(I)$.
\end{itemize}
A collection of labeled examples is a pair $E = (E^+, E^-)$ where $E^+$ and
$E^-$ are sets of examples. We say that a CQ \emph{fits} $E$ if each element of
$E^+$ is a positive example for $q$ and each element of $E^-$ is a negative
example for $q$.
A \emph{fitting algorithm} takes as input a collection of labeled examples $E$
for which a fitting CQ is promised to exist, and returns such a CQ.

A \emph{homomorphism} $h$ from an instance $I$ to an instance $J$
(over the same schema) is a function $h \colon \mn{adom}(I) \to \mn{adom}(J)$ such that
the $h$-image of every fact of $I$ is a fact of $J$.
We write $I \to J$ to indicate the existence of a homomorphism from $I$ to $J$.
For two data examples $(I, \mathbf{a})$ and $(J, \mathbf{b})$ we write
$(I, \mathbf{a}) \to (J, \mathbf{b})$ if there is a homomorphism
$h$ from $I$ to $J$ with $h(\mathbf{a}) = \mathbf{b}$.

The \emph{canonical database instance} of a CQ $q$ is the instance $I_q$ (over the
same schema as $q$) whose active domain consists of the variables that occur in
$q$ and whose facts are the atomic formulas in $q$. The \emph{canonical example} $e_q$ of a CQ $q(\mathbf{x})$
is the example $(I_q, \mathbf{x})$.
The \emph{canonical CQ} $q_I(\mathbf{a})$ of an example $(I, \mathbf{a})$ is the
CQ whose atoms are the facts of $I$ and whose variables are the values
in $\mn{adom}(I)$.

It is well known that for all CQs $q(\mathbf{x})$ and instances $I$ over the
same schema, $\mathbf{a} \in q(I)$ if and only if $(I_q, \mathbf{x}) \to (I,
\mathbf{a})$. Furthermore, for all CQs
$q_1(\mathbf{x})$ and $q_2(\mathbf{y})$ of the same arity and over the same
schema, $q_1 \subseteq q_2$ if and only if $(I_{q_2}, \mathbf{y}) \to (I_{q_1},
\mathbf{x})$. We will use these facts freely.

\subsection{PAC Learning}

For CQs $q$, $q_T$ of the same arity $n$ and a probability distribution over examples $P$ of arity~$n$, we define
\[\mn{error}_{P,q_T}(q)= \mathop{\textrm{Pr}}_{(I, \mathbf{a}) \in P}(\mathbf{a} \in q(I) \mathop{\triangle} q_T(I))\]
to be the expected
error of $q$ relative to $q_T$ and $P$ where $\triangle$ denotes symmetry difference.
In other words, $\mn{error}_{P,q_T}(q)$ is the probability that $q$ disagrees with $q_T$
on an example drawn at random from $P$.

\begin{definition}
  \label{def:efflearn}
  A \emph{PAC learning algorithm} is a (potentially randomized) algorithm $\Amf$
  associated with a function
  $m \colon \mathbb{R}^2 \times \mathbb{N}^3 \rightarrow \mathbb{N}$ such
  that
  \begin{itemize}

  \item $\Amf$ takes as input a collection of labeled data examples $E$;

  \item for all $\epsilon,\delta \in (0,1)$, 
    all finite schemas $\Smc$, all $s_Q,s_E \geq 0$, all
    probability distributions $P$ over examples $(I, \mathbf{a})$
    over $\Smc$ with $|I| \leq s_E$,
    and all CQs $q_T$ with $||q_T|| \leq s_Q$, the
    following holds: when running $\Amf$ on a collection $E$ of
    at least $m(\frac{1}{\delta},\frac{1}{\epsilon},|\Smc|, s_Q, s_E)$
    data examples drawn from $P$ and labeled by $q_T$, 
    $\Amf$ returns a CQ $q_H$ such that with
    probability at least $1 - \delta$ (over the choice of $E$), we
    have $\mn{error}_{P,q_T}(q_H) \leq \epsilon$.
\end{itemize}
We say that $\Amf$ \emph{has sample size} $m$ and call $\Amf$
\emph{sample-efficient} if $m$ is a polynomial.
\end{definition}

Some stricter definitions of PAC learning algorithms let $m$ only depend on
$\frac{1}{\delta}$ and $\frac{1}{\epsilon}$. By including $|\Smc|$, $s_E$ and
$s_Q$ as parameters of $m$, our definition is more liberal and we obtain
stronger lower bounds.
Note that we do not impose any requirements on the running time of a PAC learning
algorithm,  and that sample efficiency only requires the  number of examples 
needed to achieve generalization to be bounded polynomially, but not the running
time.
This is because it is known that there are no polynomial time PAC learning
algorithms for CQs unless RP$=$NP~\cite{IPL24:PAC,DBLP:conf/ecml/Kietz93}.

We remark that sample-efficient PAC learning algorithms exist. In fact,
using a well known connection between PAC learning algorithms and Occam
algorithms~\cite{DBLP:journals/jacm/BlumerEHW89}, one can obtain the following.
\begin{theorem}[\cite{IPL24:PAC}]
  Let $\Amf$ be a fitting algorithm that always produces the smallest fitting CQ.
  Then $\Amf$ is a sample-efficient PAC learning algorithm.
\end{theorem}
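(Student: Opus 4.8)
The plan is to apply the classical correspondence between consistent hypothesis selection from a small hypothesis class and PAC learning, i.e.\ the Occam-razor argument of~\cite{DBLP:journals/jacm/BlumerEHW89}; in the realizable setting this reduces to the textbook sample bound for finite hypothesis classes, so nothing beyond a union bound is required. The one observation tying this to the specific algorithm is the following: whenever $\Amf$ is run on a collection $E$ of examples drawn from a distribution $P$ and labeled by a target CQ $q_T$ with $||q_T|| \le s_Q$, the query $q_T$ itself fits $E$ (by definition of ``labeled by $q_T$''), so $\Amf$ is applicable and the smallest fitting CQ $q_H$ it returns satisfies $||q_H|| \le s_Q$. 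Consequently, regardless of any internal randomness of $\Amf$ and of how ties among minimum-size fittings are broken, $\Amf$ always returns a CQ lying in the finite class $\mathcal{H}_{\Smc,s_Q}$ of CQs over $\Smc$ of the (fixed) arity read off from $E$ and of size at most $s_Q$, and this CQ is \emph{consistent} with the sample, meaning that it agrees with $q_T$ on every example of $E$.

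The next step is a purely combinatorial bound on $|\mathcal{H}_{\Smc,s_Q}|$. A CQ with at most $s_Q$ atoms mentions only polynomially many variables (in $s_Q$, the query arity, and the largest arity occurring in $\Smc$), so up to renaming of variables there are at most $N := N(|\Smc|,s_Q)$ such CQs, where $\ln N$ is bounded by a polynomial in $|\Smc|$ and $s_Q$ and, crucially, is independent of $\epsilon$, $\delta$, and the sample size. (Here I read $|\Smc|$ as also accounting for the arities of the relation symbols; alternatively one restricts to schemas of bounded arity. The arity of the hypotheses is determined by $E$ and can be absorbed into $s_E$ or listed as a further parameter of $m$ without affecting anything.)

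Then I would run the standard realizable PAC estimate. Fix any single $q \in \mathcal{H}_{\Smc,s_Q}$ with $\mn{error}_{P,q_T}(q) > \epsilon$. Since the $m$ examples of $E$ are drawn independently from $P$, the probability that $q$ agrees with $q_T$ on all of them is below $(1-\epsilon)^m \le e^{-\epsilon m}$, and a union bound over the at most $N$ such ``bad'' hypotheses shows that the probability that some bad hypothesis in $\mathcal{H}_{\Smc,s_Q}$ fits $E$ is below $N e^{-\epsilon m}$. As $q_H \in \mathcal{H}_{\Smc,s_Q}$ and $q_H$ fits $E$, the event $\mn{error}_{P,q_T}(q_H) > \epsilon$ is contained in that event and hence has probability at most $N e^{-\epsilon m}$. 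Setting $m := \lceil \tfrac{1}{\epsilon}(\ln N + \ln \tfrac{1}{\delta}) \rceil$ makes this at most $\delta$; this defines the sample-size function $m(\tfrac{1}{\delta},\tfrac{1}{\epsilon},|\Smc|,s_Q,s_E)$, which is polynomial in all of its arguments because $\ln N$ is. Therefore $\Amf$ meets Definition~\ref{def:efflearn} with a polynomial sample size, i.e.\ it is a sample-efficient PAC learning algorithm, recovering~\cite{IPL24:PAC}.

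I do not expect a genuine obstacle here: the only place requiring care is the counting step, where one must verify that $\ln$ of the number of size-$\le s_Q$ CQs over $\Smc$ really is polynomial in the parameters that $m$ is permitted to depend on — which is exactly where the handling of relation arities (and of the query arity) enters. Once that bound is in place, the rest is the familiar finite-hypothesis-class analysis.
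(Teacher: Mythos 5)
Your proof is correct and matches the approach the paper alludes to: the theorem is cited from \cite{IPL24:PAC} with a pointer to the Occam-algorithm connection of \cite{DBLP:journals/jacm/BlumerEHW89}, and the finite-hypothesis-class union bound you apply is exactly the realizable-case instance of that connection, resting on the key observation that since $q_T$ fits $E$, the smallest fitting CQ satisfies $||q_H|| \le ||q_T|| \le s_Q$, so $q_H$ always lies in a class whose log-cardinality is polynomial in $s_Q$ and $|\Smc|$. Your care about arities and query arity in the counting step is exactly what is needed to keep $\ln N$ polynomial in the parameters $m$ may depend on.
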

An algorithm of this kind can easily be obtained:
enumerate all CQs over $\Smc$ in order of increasing size and check for each
candidate $q$, whether it fits the input examples. For some classes of queries,
such algorithms can even be efficiently implemented \cite{IJCAI23:SAT}.

\subsection{Extremal Fittings}

Besides minimal size, there are other interesting properties that a fitting CQ
might have. From a logical perspective, fitting CQs stick out that are \emph{extremal}
in the sense of being most general or most specific among all fitting CQs.
For example, extremal fittings are interesting because the set of all fitting CQs is convex,
meaning that if $q_1$ and $q_2$ are fitting CQs and $q_1 \subseteq q \subseteq q_2$,
then $q$ is also a fitting CQ. When they exist, extremal fitting CQs thus characterize
the space of all fitting CQs. They have recently been studied in detail in \cite{pods2023:extremal}. 
%
%For CQs $q_1, q_2$ with $q_1 \subseteq q_2$ and $q_1 \not\equiv q_2$,
%we say that $q_1$ is \emph{more specific} than $q_2$ and $q_2$ is \emph{more general} than
%$q_1$. 
%From a logical perspective, extremal fittings are of special interest, by which we mean
%fittings that are most general or most specific among all fittings. These have been

Let $E$ be a collection of labeled examples. A CQ $q$ is a
\begin{itemize}
  \item \emph{most-specific fitting} CQ for $E$ if $q$ fits $E$ and for every CQ $q'$ that fits $E$,
    $q \subseteq q$;
  \item  \emph{strongly most-general fitting} CQ for $E$ if $q$ fits $E$ and
    for every CQ  $q'$ that fits $E$, $q'\subseteq q$; 
  \item \emph{weakly most-general fitting} CQ for  $E$ if $q$ fits $E$ and
    for every CQ $q'$ that fits $E$, $q\subseteq q'$ implies $q \equiv q'$.
\end{itemize}
There is also a more general notion than strongly most-general fitting:
\begin{itemize}
    \item
    a finite set of CQs $\{q_1, \ldots, q_n\}$ is a \emph{basis of most-general fitting CQs}
for  $E$ if each $q_i$ fits $E$ and 
for all CQs $q'$ that fit $E$, we have $q'\subseteq q_i$ for some $i\leq n$.
\end{itemize}
Note that the definition of most-specific fitting CQs parallels that of strongly most-general fitting CQs. One may also define a weak version of most-specific fitting CQs,
but it turns out to be equivalent \cite{pods2023:extremal}.
The different notions of
most-general fitting CQ are connected as follows.
If $q$ is a strongly most-general fitting CQ, then $\{q\}$ is a basis of
most-general fitting CQs. If a basis of most-general fitting CQs is subset minimal,
then it contains only weakly most-general fitting CQs.
Furthermore, if a strongly most-general fitting CQ exists, then it is the only
weakly most-general fitting CQ.

A most-specific fitting CQ $q$ and a basis of most-general fitting CQs $\{q_1,
\dots, q_n\}$ completely describe the space of all fitting CQs: For all CQs
$q'$, $q \subseteq q'$ and $q' \subseteq q_i$ for some $i$, if and only if $q'$
fits $E$.

Most-specific fitting CQs are closely related to direct products of examples.
The \emph{direct product} $I \times J$ of two instances $I$ and $J$ is the instance that consists of
all facts $R(\langle a_1, b_1\rangle, \ldots, \langle a_n,b_n\rangle)$,
where $R(a_1, \ldots, a_n)$ is a fact in $I$ and 
$R(b_1, \ldots, b_n)$ is a fact in~$J$. 
The direct product $(I,\textbf{a})\times (J,\textbf{b})$ of two $k$-ary
examples, where $\textbf{a}=a_1, \ldots, a_k$ and $\textbf{b}=b_1, \ldots b_k$,
is $(I\times J, (\langle a_1, b_1\rangle, \ldots, \langle a_k,b_k\rangle))$.
% \textcolor{red}{In general, this may not yield a well-defined example because
% there is no guarantee that the distinguished elements $\langle a_1, b_1\rangle,
% \ldots, \langle a_k,b_k\rangle$ belong to $\mn{adom}(I\times J)$. }
%The relationship is the following.
\begin{theorem}[\cite{pods2023:extremal}]\label{thm:most-specific-fitting-chara}
For all CQs $q$ and collections of labeled examples $E=(E^+,E^-)$, the following are equivalent:
\begin{enumerate}
    \item  $q$ is a most-specific fitting CQ for $E$,
    \item  $q$ fits $E$ and is equivalent to the canonical CQ of $\Pi_{e\in E^+}(e)$.
    %(which must then be well-defined).
\end{enumerate}
\end{theorem}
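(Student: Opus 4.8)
The plan is to push everything through the universal property of direct products. For data examples $(I,\mathbf{a})$, $(J,\mathbf{b})$, $(K,\mathbf{c})$ of a common arity, I first observe that $(K,\mathbf{c}) \to (I,\mathbf{a})\times(J,\mathbf{b})$ holds iff $(K,\mathbf{c}) \to (I,\mathbf{a})$ and $(K,\mathbf{c}) \to (J,\mathbf{b})$: from left to right one post-composes with the coordinate projections $\langle a,b\rangle \mapsto a$ and $\langle a,b\rangle \mapsto b$, which are homomorphisms by the definition of $\times$; from right to left, given witnesses $h_1,h_2$, the map $k \mapsto \langle h_1(k),h_2(k)\rangle$ is a homomorphism into the product sending $\mathbf{c}$ to $\langle \mathbf{a},\mathbf{b}\rangle$. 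By induction this extends to the finite product $e_P := \Pi_{e\in E^+}(e)$: for every data example $(K,\mathbf{c})$ of the right arity, $(K,\mathbf{c}) \to e_P$ iff $(K,\mathbf{c}) \to e$ for every $e \in E^+$. Throughout I assume $E^+ \neq \emptyset$; the degenerate case must be excluded or treated separately, since the empty product is not a finite structure. I write $q_P$ for the canonical CQ of $e_P$ and note that its canonical example $e_{q_P}$ coincides with $e_P$ up to renaming.

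The single structural fact I need is the following chain: a CQ $q'$ is a positive example for every $e \in E^+$ iff $e_{q'} \to e$ for all $e \in E^+$ (by the homomorphism characterization of query answering from the preliminaries), iff $e_{q'} \to e_P$ (universal property of the product), iff $q_P \subseteq q'$ (by the containment characterization). In particular, every CQ $q'$ that fits $E$ satisfies $q_P \subseteq q'$; call this observation $(\ast)$.

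For $(1) \Rightarrow (2)$: assume $q$ is a most-specific fitting CQ for $E$. Then $q$ fits $E$, so $(\ast)$ gives $q_P \subseteq q$, i.e.\ a homomorphism $e_q \to e_{q_P}$. I then claim $q_P$ itself fits $E$. It satisfies every positive example because $e_{q_P} = e_P \to e$ via the projections; and if $e_{q_P} \to f$ for some $f \in E^-$, composing with $e_q \to e_{q_P}$ would give $e_q \to f$, contradicting that $q$ fits $E$. Since $q$ is most-specific and $q_P$ fits $E$, we get $q \subseteq q_P$, and together with $q_P \subseteq q$ this yields $q \equiv q_P$, which is $(2)$.

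For $(2) \Rightarrow (1)$: assume $q$ fits $E$ and $q \equiv q_P$. Given any CQ $q'$ that fits $E$, $(\ast)$ yields $q_P \subseteq q'$, hence $q \subseteq q'$; combined with the hypothesis that $q$ fits $E$, this is precisely the definition of $q$ being a most-specific fitting CQ. The only step that is more than routine bookkeeping is establishing the universal property of the iterated direct product for examples carrying answer tuples; once that is available, the canonical-CQ/canonical-example correspondence does the rest, so I expect no real obstacle beyond being careful about the empty-$E^+$ edge case.
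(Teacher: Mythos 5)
Your proof is correct. Since the paper cites Theorem~\ref{thm:most-specific-fitting-chara} from \cite{pods2023:extremal} without reproducing a proof, there is nothing in this document to compare against line by line; but your argument is exactly the standard one: the universal property of the direct product of pointed structures, combined with the homomorphism characterizations of query answering and of containment, yields that the canonical CQ of $\Pi_{e\in E^+}(e)$ is (up to equivalence) the $\subseteq$-least CQ fitting the positive examples, and the rest is bookkeeping. The chain establishing $(\ast)$, the two directions, and the observation that $q_P$ inherits fitting of the negatives from any fitting $q$ with $q_P\subseteq q$ are all sound.

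Two small points worth tightening. First, the phrase ``a CQ $q'$ is a positive example for every $e\in E^+$'' has the roles reversed; you mean ``every $e\in E^+$ is a positive example for $q'$,'' and the rest of the chain makes clear that is what you intend. Second, your flag about $E^+=\emptyset$ is well placed: in that case the empty product is the terminal structure and the ``canonical CQ'' degenerates, and indeed a most-specific fitting CQ need not exist then (every satisfiable CQ has a strictly more specific satisfiable CQ over a schema with at least one binary relation), so the theorem should be read with $E^+\neq\emptyset$, which is how it is used throughout the paper.
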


Strongly most-general fitting CQs and, more generally, finite bases of
most-general fitting CQs are closely related to \emph{homomorphism dualities}, a
fundamental concept that originates from combinatorial graph theory and that has
found diverse applications in different areas, including the study of constraint
satisfaction problems, database theory and knowledge representation. Here, we
use a relativized version of homomorphism dualities introduced in \cite{pods2023:extremal}.

\begin{definition}
Let $(J, \mathbf{b})$ be an example. A \emph{homomorphism duality relative to}
$(J, \mathbf{b})$ is a pair of finite sets of examples $(\mathcal{F}, \mathcal{D})$ such that for
all examples $(J', \mathbf{b}')$ with $(J', \mathbf{b}') \to (J, \mathbf{b})$,
it holds that $(I_f, \mathbf{a}_f) \to (J', \mathbf{b}')$ for some $(I_f,
\mathbf{a}_f) \in \mathcal{F}$ if and only if $(J', \mathbf{b}') \not\to (I_d,
\mathbf{a}_d)$ for all $(I_d, \mathbf{a}_d) \in \mathcal{D}$.
\end{definition}

\begin{theorem}[\cite{pods2023:extremal}]\label{thm:most-general-fitting-chara}
For all CQs $q_1, \ldots, q_n$ and collections of labeled examples $E$, the 
following are equivalent:
\begin{enumerate}
    \item $\{q_1, \ldots, q_n\}$ is a basis of most-general fitting CQs for $E$
    \item each $q_i$ fits $E$, and 
    $(\{e_{q_1}, \ldots, e_{q_n} \}, E^-)$ is a homomorphism duality relative to $\Pi_{e\in E^+} (e)$.
\end{enumerate}
\end{theorem}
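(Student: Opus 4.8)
The plan is to reduce both conditions to statements about homomorphisms between examples and then observe that these statements coincide. I would use three of the standard facts recorded above: (i) $\mathbf{a}\in q(I)$ iff $e_q\to(I,\mathbf{a})$; (ii) $q_1\subseteq q_2$ iff $e_{q_2}\to e_{q_1}$; and (iii) every example $(J',\mathbf{b}')$ is the canonical example of its canonical CQ $q_{(J',\mathbf{b}')}$, so that ``for all CQs $q'$'' and ``for all examples $(J',\mathbf{b}')$'' are interchangeable. Combining (i) with the universal property of direct products (a homomorphism into $\Pi_{e\in E^+}(e)$ is the same thing as a family of homomorphisms into the factors), I get: a CQ $q'$ fits $E$ iff $e_{q'}\to\Pi_{e\in E^+}(e)$ and $e_{q'}\not\to e'$ for every $e'\in E^-$. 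Write $\Pi:=\Pi_{e\in E^+}(e)$ for brevity.

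For (1) $\Rightarrow$ (2): the clause ``each $q_i$ fits $E$'' appears verbatim in both conditions, so it suffices to establish the relativized duality. Fix an example $(J',\mathbf{b}')$ with $(J',\mathbf{b}')\to\Pi$. For the ``if'' direction, suppose $(J',\mathbf{b}')\not\to e'$ for all $e'\in E^-$; then $q_{(J',\mathbf{b}')}$ fits $E$, since its canonical example is $(J',\mathbf{b}')$, which maps into every $e\in E^+$ by projecting along $(J',\mathbf{b}')\to\Pi$, and into no $e'\in E^-$ by assumption. Hence by (1) there is an $i$ with $q_{(J',\mathbf{b}')}\subseteq q_i$, i.e.\ $e_{q_i}\to(J',\mathbf{b}')$ by (ii). For the ``only if'' direction, suppose $e_{q_i}\to(J',\mathbf{b}')$ for some $i$; if $(J',\mathbf{b}')\to e'$ held for some $e'\in E^-$, then composing homomorphisms would give $e_{q_i}\to e'$, contradicting that $q_i$ fits $E$; hence $(J',\mathbf{b}')\not\to e'$ for all $e'\in E^-$. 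Note that this direction uses only ``each $q_i$ fits $E^-$'', not the basis property.

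For (2) $\Rightarrow$ (1): again ``each $q_i$ fits $E$'' is common to both conditions. Let $q'$ be any CQ fitting $E$. Then $(J',\mathbf{b}'):=e_{q'}$ satisfies $(J',\mathbf{b}')\to\Pi$ and $(J',\mathbf{b}')\not\to e'$ for all $e'\in E^-$, so the ``if'' direction of the relativized duality supplies an $i$ with $e_{q_i}\to e_{q'}$, that is, $q'\subseteq q_i$ by (ii). Thus $\{q_1,\dots,q_n\}$ is a basis of most-general fitting CQs for $E$.

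The argument is essentially bookkeeping, and the one point that requires care is the contravariance in fact (ii): a more specific CQ corresponds to an example that the canonical examples of its generalizations map \emph{into}, so the ``homomorphism arrows'' in the basis condition and in the definition of relativized duality only line up after this translation. Degenerate situations — for instance $n=0$, or an $E^+$ whose product already maps into some negative example (so that no fitting CQ exists) — are handled uniformly by the same three facts and require no separate treatment.
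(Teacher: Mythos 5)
The paper does not actually prove this theorem; it is stated as Theorem~3 and attributed to the cited reference \texttt{[pods2023:extremal]}, so there is no ``paper's own proof'' to compare against. Judged on its own terms, your argument is correct and complete. The reduction of ``fits $E$'' to ``$e_{q'} \to \Pi_{e\in E^+}(e)$ and $e_{q'}\not\to e'$ for all $e'\in E^-$'' via the universal property of the product is exactly the right normalization, and the two translations you then carry out---from the basis property to the duality and back---are both sound. You correctly isolate the one delicate point (the contravariance of $q_1\subseteq q_2 \Leftrightarrow e_{q_2}\to e_{q_1}$), and you are also right that in the ``only if'' direction of the duality only ``each $q_i$ fits $E^-$'' is used, while the basis property is needed only for the ``if'' direction. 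The proof is essentially the expected unwinding of the definitions given in the paper's Preliminaries, and nothing in it relies on material beyond what the paper provides.

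One very minor point of hygiene: the passage between an example $(J',\mathbf{b}')$ and its canonical CQ, and then back to the canonical example of that CQ, yields an example isomorphic to (rather than literally equal to) $(J',\mathbf{b}')$, because values are renamed into variables. Since homomorphism existence is invariant under isomorphism this is harmless, but stating ``(up to isomorphism)'' when you invoke fact (iii) would make the step airtight. Similarly, your closing remark about degenerate cases is fine, but if $E^+=\emptyset$ the empty product needs a convention (the one-element instance with all facts, so that every example maps into it), and it would be worth saying so explicitly.
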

Or, formulated for strongly most-general fitting CQs:
$q$ is a strongly most-general fitting CQ if and only if $q$ fits $E$ and
$(\{e_q\}, E^-)$ is a homomorphism duality relative to $\Pi_{e\in E^+} (e)$.

\section{Most-General Fittings Preclude Sample-Efficient PAC Learning}

We show that fitting algorithms that always produce a most-general fitting CQ,
if it exists,
cannot be sample-efficient PAC learning algorithms. We focus on strongly
most-general fitting CQs since fitting algorithms that always produce
a weakly most-general fitting CQ or the elements of a base of most-general fitting CQs
must also produce a strongly most-general fitting CQ, if it exists. Consequently,
our result also applies to the latter kinds of algorithms.

The existing similar results in \cite{IJCAI23:SAT}
that pertain to tree-shaped CQs  rely heavily on
\emph{simulation dualities}, defined like homomorphism dualities but using the
weaker simulation relation instead of homomorphisms.
In fact, \cite{pods2023:extremal} gives a characterization of bases of most-general fitting tree CQs in terms of simulation dualities instead of
homomorphism dualities.
Guided by Theorem~\ref{thm:most-general-fitting-chara}, we aim to base
our proofs on homomorphism
dualities instead.

Known constructions of a duality $(\mathcal{F}, \mathcal{D})$ from $\mathcal{F}$
relative to some example $(J, \mathbf{b})$ result in
examples $\mathcal{D}$ that have  size exponential in $||\mathcal{F}||$ and
$|J|$~\cite{tCD2022:conjunctive}.
This makes them unsuitable to show non-sample-efficiency as it would force
us to use negative data examples that are exponentially larger than the
positive examples. Since Definition~\ref{def:efflearn} allows the sample size $m$ to depend on
the size of the examples, this would effectively allow a
sample-efficient PAC learning algorithm to use an exponential number of examples.
In order to avoid this effect, we begin by showing that for a certain
restricted class of examples $\mathcal{F}$, a homomorphism duality $(\mathcal{F}, \mathcal{D})$ exists
such that  $\mathcal{D}$ is of polynomial size 
(and  can be computed in polynomial time).

A \emph{path instance} is a database instance $I$ such that 
the facts in $I$ are of the form
\[
    R_1(a_0, a_1), \dots, R_n(a_{n - 1}, a_n), P_1(a_{j_1}), \dots, P_m(a_{j_m})
\]
where all $R_i$ are binary relation symbols, all $P_i$ are unary relation
symbols and for all $j_i$, $j_i > 0$.
A \emph{path example} $(I, a)$ is a unary data example where $I$ is a path instance and $a =
a_0$.

\begin{lemma}\label{lem:path-dual}
Let $(I, a_0)$ and $(J, b_0)$ be path examples. There exists a data example $(D,
d)$ that can be computed in time polynomial in $|I|$ and $|J|$ such that
$(\{(I, a_0)\}, \{(D, d)\})$ is a homomorphism duality relative to $(J, b_0)$.
\end{lemma}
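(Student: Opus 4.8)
The plan is to first analyse homomorphisms between path examples, reduce to a non-degenerate case, and then build $(D,d)$ essentially as the disjoint union of $J$ with a path that is exactly one step too short to embed $I$, decorated with small gadgets that account for the unary predicates. Write $I$ as $a_0\xrightarrow{R_1}a_1\xrightarrow{R_2}\cdots\xrightarrow{R_n}a_n$ and set $U_j=\{P:P(a_j)\in I\}$ (so $U_0=\emptyset$ by the definition of a path example), and similarly write $J$ as $b_0\xrightarrow{S_1}\cdots\xrightarrow{S_m}b_m$ with $V_j=\{P:P(b_j)\in J\}$; I will assume $I$ and $J$ are genuine directed paths (all $a_i$ distinct, all $b_j$ distinct), the remaining cases being either vacuous or simpler. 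The first step is the observation that homomorphisms into a directed path are rigid: if $f\colon(J',b')\to(J,b_0)$, then every binary fact of $J'$ is sent by $f$ to some $S_{j+1}(b_j,b_{j+1})$, so $J'$ is layered over the positions $0,\dots,m$ of $J$ and $f$ assigns a well-defined position to each value of $J'$. Consequently $(I,a_0)\to(J,b_0)$ holds iff $n\le m$ and $R_k=S_k$ and $U_k\subseteq V_k$ for all $k\le n$. If this fails --- the degenerate case --- then by composition $(I,a_0)\not\to(J',b')$ for every $(J',b')\to(J,b_0)$, and I would take $(D,d):=(J,b_0)$, for which the required equivalence reads ``false iff $\lnot$true''. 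So the real work is the non-degenerate case, where I assume from now on that $n\le m$ and $R_k=S_k$, $U_k\subseteq V_k$ for all $k\le n$.

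In that case rigidity shows that $(I,a_0)\to(J',b')$ holds iff there is a forward walk $b'=c_0,c_1,\dots,c_n$ in $J'$ with $U_k\subseteq\mn{un}(c_k)$ for all $k$ (the labels being forced); call such a $b'$ \emph{alive}. So I need $(D,d)$ with $(J',b')\to(D,d)$ iff $b'$ is not alive, for all $(J',b')\to(J,b_0)$. I would take $D$ to be the union of: (i)~a copy of $J$; (ii)~a fresh path $\kappa_0\xrightarrow{S_1}\kappa_1\xrightarrow{S_2}\cdots\xrightarrow{S_{n-1}}\kappa_{n-1}$ with $\mn{un}(\kappa_j)=V_j$, which is one step too short to spell $R_1\cdots R_n$; and (iii)~for each $1\le j\le n$ and each $P\in U_j$ a ``unary-deficient'' vertex $v_{j,P}$ with $\mn{un}(v_{j,P})=V_j\setminus\{P\}$. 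For every $j$ I add all $S_{j+1}$-labelled edges from the position-$j$ vertices among $\{\kappa_j,b_j\}\cup\{v_{j,P}:P\in U_j\}$ to the position-$(j{+}1)$ vertices among $\{\kappa_{j+1},b_{j+1}\}\cup\{v_{j+1,P'}:P'\in U_{j+1}\}$, with the single crucial exception that no $\kappa_j$ gets an edge into the copy of $J$. Finally $d:=\kappa_0$. This $D$ has size polynomial in $|I|+|J|$ and is computable in polynomial time.

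For the backward direction I would first note that $(I,a_0)\not\to(D,d)$, indeed that there is no length-$n$ walk $\kappa_0=x_0,\dots,x_n$ in $D$ with $U_k\subseteq\mn{un}(x_k)$ for all $k$: the walk is layered, so $x_n$ has position $n$ and hence is not a $\kappa$-vertex; taking $i$ least with $x_i$ not a $\kappa$-vertex, the step $x_{i-1}=\kappa_{i-1}\to x_i$ must land in some $v_{i,P}$ (the only non-$\kappa$ successors of a $\kappa$-vertex), but $P\in U_i$ and $P\notin\mn{un}(v_{i,P})$, a contradiction. Hence if $g\colon J'\to D$ sends $b'$ to $\kappa_0$, then $b'$ cannot be alive (a witnessing walk pushed through $g$ would be exactly such a walk). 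For the forward direction, given $(J',b')\to(J,b_0)$ with $b'$ not alive, I would define $g$ by sending a value $c$ of position $j$ to $b_j$ if $j>n$ or $c$ can still complete the $I$-pattern from $j$; to $\kappa_j$ if $j\le n-1$, $U_j\subseteq\mn{un}(c)$ and $c$ cannot complete the pattern from $j$; and to some $v_{j,P}$ with $P\in U_j\setminus\mn{un}(c)$ otherwise. One then checks $g(b')=\kappa_0=d$ (using $U_0=\emptyset$), that $g$ preserves unary facts, and that $g$ maps every edge of $J'$ to an edge of $D$; the only transition that would require a forbidden $\kappa_j\to J$ edge is one from a ``cannot complete at $j$'' vertex to a ``can complete at $j{+}1$'' vertex, which is impossible because $c\to c'$ with $U_{j+1}\subseteq\mn{un}(c')$ and $c'$ completing from $j{+}1$ forces $c$ to complete from $j$.

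The step I expect to be the main obstacle is getting $(D,d)$ exactly right: the forward direction pushes toward making $d$ very permissive (many edges into $D$, so that every layered structure falling short of $I$ maps in), whereas the backward direction --- equivalently the derivable requirement $(I,a_0)\not\to(D,d)$ --- forbids any clean $R_1\cdots R_n$-walk out of $d$; these pull in opposite directions, and the resolution (a $\kappa$-path that is one step too short, together with the rule that its only way ``out'' is through a unary-deficient vertex, hence forbidding precisely the edges $\kappa_j\to J$) is the crux. A secondary point is to verify carefully that this one exception is simultaneously sufficient for the backward direction and never needed for the forward direction, which comes down to the small case analysis on the types of endpoints of an edge of $J'$ sketched above.
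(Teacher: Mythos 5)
Your proof is correct and takes essentially the same approach as the paper: your vertices $b_j$, $\kappa_j$, and $v_{j,P}$ are precisely the paper's pairs $\langle b_j, R_j(a_{j-1},a_j)\rangle$ (and $\langle b_j,\circ\rangle$), $\langle b_j, R_{j+1}(a_j,a_{j+1})\rangle$, and $\langle b_j, P(a_j)\rangle$, and your ``no $\kappa_j$-to-$J$'' exception is exactly the paper's condition $f\neq R_{j+1}(a_j,a_{j+1})$ or $f'\neq R_{j+1}(a_j,a_{j+1})$. The two directions of the duality verification (the inductive ``no length-$n$ walk out of $d$'' argument and the explicit case-based definition of $g$ by ``can complete / label deficient / otherwise'') also match the paper's proof, just with more transparent names.
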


\begin{proof}
  Since $I$ and $J$ are path instances, assume that $\mn{adom}(I) = \{a_0, \dots, a_n\}$
  and $\mn{adom}(J) = \{b_0, \dots b_m\}$.
  We can check in polynomial time whether $(I, a_0) \to (J, b_0)$. We distinguish cases.

  If $(I, a_0) \not\to (J, b_0)$, then
  it follows that
  $(I, a_0)  \not\to (J', b')$ 
  for all data examples $(J', b')$ with $(J', b') \to (J, b_0)$.
  Hence, $(\{(I, a_0)\}, \{(J, b_0)\})$ is a homomorphism duality relative to $(J,
  b_0)$.

  If $(I, a_0) \to (J, b_0)$, we construct a data example $(D, d)$ such
  that $(\{(I, a_0)\}, \{ (D, d) \})$ is the desired homomorphism duality relative to $(J,
  b_0)$ as follows.
  Since $(I, a_0) \to (J, b_0)$, it must be that $n \leq m$ and the binary relation symbol $R_i$
  is the same in $I$ and $J$ for all $i$ with $1 \leq i \leq n$.
  The values used in $D$ will be pairs $\langle b_i, f\rangle$ where $b_i \in \mn{adom}(J)$ and
  $f$ is a fact from $I$ that mentions $a_i$ or the dummy fact $\circ$.
  More specifically, $\mn{adom}(D)$ will be the following set of
  values:
  \begin{align*}
    & \{ \langle b_i, \circ \rangle \mid i = 0 \text{ or } n < i \leq m \} \cup {} \\
    & \{ \langle b_i, R_i(a_{i - 1}, a_i) \rangle \mid 1 \leq i \leq n \} \cup {} \\
    & \{ \langle b_i, R_{i + 1}(a_i, a_{i + 1}) \rangle \mid 0 \leq i \leq n - 1 \} \cup {} \\
    & \{ \langle b_i, P(a_i) \rangle \mid 1 \leq i \leq n \text{ and } P(a_i) \in I \}.
  \end{align*}
  For all values $\langle b_i, f \rangle, \langle b_{i + 1}, f'\rangle$ in the above set we add to $D$  the fact
  \begin{itemize} 
  \item $R_{i + 1}( \langle b_i, f \rangle, \langle b_{i + 1}, f' \rangle)$ if 
  $f \neq R_{i + 1}(a_i, a_{i + 1})$ or $f' \neq R_{i + 1}(a_{i}, a_{i + 1})$,
  \item  $P(\langle b_{i + 1}, f' \rangle)$ if $P(b_{i + 1}) \in J$ and $f' \neq P(a_{i + 1})$.
  \end{itemize}
  Then set $d$ to be the value $\langle b_0, R_1(a_0, a_1) \rangle$. 
  This completes the construction of $(D, d)$.
  Note that every binary fact $R_{i + 1}(\langle b_i, f\rangle, \langle b_{i +
  1}, f'\rangle)$ in $D$ is a copy of a binary fact $R_{i + 1}(b_i, b_{i + 1})$
  in $J$. Since $f, f' \in I \cup \{\circ\}$, there are at most $(|I| + 1)^2$
  copies of every binary fact of $J$ in $D$. The same is true for unary facts.
  Therefore $|D| \leq |J| \cdot (|I| + 1)^2$. It is easy to
  see that $D$ can be computed in polynomial time using the above construction.

  \medskip

  It remains to show that $(\{(I, a_0)\}, \{(D, d)\})$ is a homomorphism duality relative to $(J, b_0)$.
  First, we show that $(I, a_0) \to (J', b')$ implies $(J', b') \not\to (D,
  d)$ for all examples $(J', b')$ with $(J', b') \to (J, b_0)$. For this, it suffices to show that
  $(I, a_0) \not \to (D, d)$. 
  Let $I_k$ be the restriction of $I$ to facts that 
  contain only values $a_i$ with $i \geq k$. We show that
  for all $i$ with $1 \leq i \leq n$ and all $\langle b_i, f\rangle \in \mn{adom}(D)$, 
  \[
    (I_i, a_i) \to (D, \langle b_i, f \rangle)\quad  \text{implies}\quad  f = R_i(a_{i - 1}, a_i). \tag{$*$}
  \]
  We show this by induction on $n - i$. In the induction start, let $i = n$.
  By construction of $D$, $f \in \{R_n(a_{n - 1}, a_n)\} \cup \{ P(a_n) \mid
  P(a_n) \in I \}$.
  Assume that $(I, a_n) \to (D, \langle b_n, f \rangle)$. Then it follows that
  $P(\langle b_n, f\rangle) \in D$ for all $P(a_n) \in I$, hence $f \neq
  P(a_n)$ for any $P(a_n) \in I$. Therefore, the only possibility that remains is
  that $f = R_n(a_{n - 1}, a_n)$.

  In the induction step, assume that the implication holds for $i + 1$ and that
  $(I_i, a_i) \to (D, \langle b_i, f \rangle)$.
  Again, this implies that $P(\langle b_i, f \rangle) \in D$ for all $P(a_i) \in I$, 
  hence $f \notin \{ P(a_i) \mid P(a_i) \in I\}$.
  Additionally, from $(I_i, a_i) \to (D, \langle b_i, f\rangle)$ and $i < n$ it follows that 
  there must be a fact $R_{i + 1}(\langle b_i, f \rangle, \langle b_{i + 1}, f' \rangle) \in D$
  such that $(I_{i + 1}, a_{i + 1}) \to (D, \langle b_{i + 1}, f' \rangle)$.
  By the induction hypothesis, $f' = R_{i + 1}(a_i, a_{i + 1})$.
  It follows from $R_{i + 1}(\langle b_i, f \rangle,
  \langle b_{i + 1}, f' \rangle) \in D$, that $f \neq R_{i + 1}(a_i, a_{i +
  1})$. Therefore, the only possibility that remains is that $f = R_{i}(a_{i - 1}, a_i)$.

  From ($*$) and $R_1(a_0, a_1) \in I$ it now follows that
  $(I, a_0) \not\to (D, \langle b_0, R_1(a_0, a_1) \rangle) = (D, d)$, since by construction of $D$, $R_1(\langle b_0,
  R_1(a_0, a_1)\rangle, \langle b_1, R_1(a_0, a_1) \rangle) \notin D$.

  \medskip

  Second, we show that $(I, a_0) \not\to (J', b')$ implies $(J', b') \to (D, d)$
  for all examples $(J', b')$ with $(J', b') \to (J, b_0)$.
  Let $(J', b')$ be an example such that
  $(I, a_0) \not\to (J', b')$ and $(J', b') \to (J, b_0)$.
  Let $h$ be a homomorphism from $J'$ to $J$ with $h(b') = b_0$.
  We construct a homomorphism $g$ from $J'$ to $(D, d)$
  with $g(b') = d$ as follows.
  For all $a \in \mn{adom}(J')$, there is an $i$ such that $0 \leq i \leq m$
  and $h(e) = b_i$. Define $g(a)$ depending on this $i$ as follows.
  For $i = 0$,
  \begin{itemize}
    \item if $(I_i, a_i) \to (J', a)$, set $g(a) = \langle b_0, \circ \rangle$;
    \item otherwise, that is, if $(I_i, a_i) \not\to (J', a)$, set $g(a) = \langle b_0, R_1(a_0, a_1) \rangle$.
  \end{itemize}
  For $1 \leq i \leq n$,
  \begin{itemize}
    \item if $(I_i, a_i) \to (J', a)$, set $g(a) = \langle b_i, R_i(a_{i - 1}, a_i) \rangle$;
    \item otherwise, that is, if $(I_i, a_i) \not\to (J', a)$, we distinguish cases:
    \begin{itemize}
    
        \item  if there is a $P(a_i) \in I_i$ such that $P(a) \notin J'$, set $g(a) = \langle b_i, P(a_i) \rangle$;
        
        \item otherwise, set $g(a) = \langle b_i, R_{i + 1}(a_i, a_{i
    + 1}) \rangle$ (note that in this case $i < n$ and $(I_{i
    + 1}, a_{i + 1}) \not\to (J', a')$ for all $R_{i + 1}(a, a') \in J'$).
    \end{itemize}
  \end{itemize}
  For $i > n$, set $g(a) = \langle b_i, \circ \rangle$.

  It remains to verify that $g$ is a homomorphism.
  Let $P(a)$ be a unary fact in $J'$. 
  Since $h$ is a homomorphism, there is a fact $P(b_i) \in J$ with $h(a) = b_i$. 
  By definition of $g$, $g(a) = \langle b_i, f \rangle$ for
  some fact $f \in \{ \circ, R_{i}(a_{i - 1}, a_i), R_{i + 1}(a_{i}, a_{i + 1}) \} \cup \{ P'(a_i) \in I \mid P' \neq P \}$.
  From the construction of $D$ it follows that $P(g(a)) \in D$ for all these
  cases of $f$.

  Let $R(a, a')$ be a binary fact in $J'$. Since $h$ is a
  homomorphism, there is a fact $R(b_i, b_{i + 1}) \in J$ with $h(a) = b_i$
  and $h(a') = b_{i + 1}$. The mapping $g$ then maps $a$ to $\langle b_i, f
  \rangle$ and $a'$ to $\langle b_{i + 1}, f' \rangle$, for some facts $f, f'
  \in I \cup \{ \circ \}$. It follows from the construction of $D$ that there is an atom
  $R(\langle b_i, f\rangle, \langle b_{i + 1}, f'\rangle) \in D$ for all $f, f'$,
  except for $f = f' = R(a_i, a_{i + 1}) \in I$ (and $i + 1 \leq n$). 
  Assume for contradiction that there is an atom $R(a_i, a_{i + 1}) \in I$ and
  $f = f' = R(a_i, a_{i + 1})$. By definition of $g$ and $f = R(a_i, a_{i + 1})$ it follows that
  there is no fact $R(a, a'') \in J'$ with $(I_{i + 1}, a_{i + 1}) \to (J', a'')$, and it follows from
  definition of $g$ and $f' = R(a_i, a_{i + 1})$ that $(I_{i + 1}, a_{i + 1}) \to (J', a')$. A
  contradiction, since $R(a, a') \in J'$.
\end{proof}

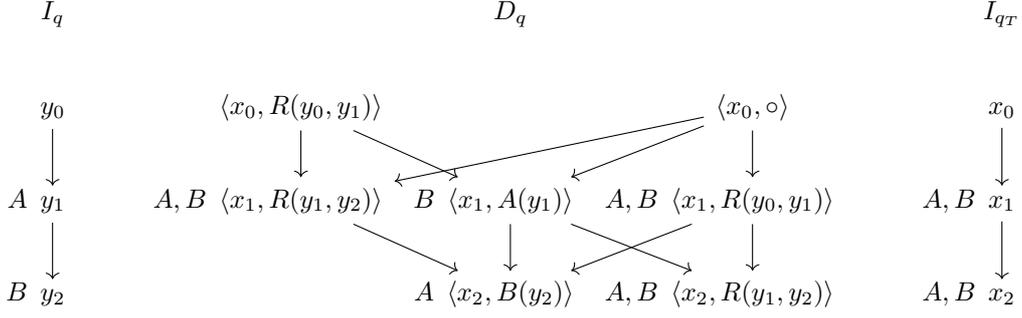
\begin{figure}
  \small
% https://q.uiver.app/#q=WzAsMTMsWzIsMCwiXFxsYW5nbGUgYV8xLCByKGJfMSwgYl8yKSBcXHJhbmdsZSJdLFsyLDEsIlxcbGFuZ2xlIGFfMiwgcihiXzIsIGJfMykgXFxyYW5nbGUiXSxbNCwwLCJcXGxhbmdsZSBhXzEsIFxcY2lyYyBcXHJhbmdsZSJdLFs0LDEsIlxcbGFuZ2xlIGFfMiwgcihiXzEsIGJfMikgXFxyYW5nbGUiXSxbNCwyLCJcXGxhbmdsZSBhXzMsIHIoYl8yLCBiXzMpIFxccmFuZ2xlIl0sWzMsMSwiXFxsYW5nbGUgYV8yLCBBKGJfMikgXFxyYW5nbGUiXSxbMywyLCJcXGxhbmdsZSBhXzMsIEIoYV8zKSBcXHJhbmdsZSJdLFs2LDAsImFfMSJdLFs2LDEsImFfMiJdLFs2LDIsImFfMyJdLFswLDAsImJfMSJdLFswLDEsImJfMiJdLFswLDIsImJfMyJdLFswLDFdLFswLDVdLFsyLDFdLFsyLDVdLFsyLDNdLFsxLDZdLFs1LDZdLFszLDZdLFs1LDRdLFszLDRdLFs3LDhdLFs4LDldLFsxMCwxMV0sWzExLDEyXV0=
\[\begin{tikzcd}[label distance=-1.5mm, column sep = 1mm]
  I_q & & & D_q & & & I_{q_T} \\
	{y_0} & \quad & {\langle x_0, R(y_0, y_1) \rangle} && {\langle x_0, \circ \rangle} & \quad & {x_0} \\
	|[label=left:A]|{y_1} && |[label=left:{A,B}]|{\langle x_1, R(y_1, y_2) \rangle} & |[label=left:B]|{\langle x_1, A(y_1) \rangle} & |[label=left:{A, B}]|{\langle x_1, R(y_0, y_1) \rangle} && |[label=left:{A,B}]|{x_1} \\
	|[label=left:B]|{y_2} &&& |[label=left:A]|{\langle x_2, B(y_2) \rangle} & |[label=left:{A, B}]|{\langle x_2, R(y_1, y_2) \rangle} && |[label=left:{A,B}]|{x_2}
	\arrow[from=2-3, to=3-3]
	\arrow[from=2-3, to=3-4]
	\arrow[from=2-5, to=3-3]
	\arrow[from=2-5, to=3-4]
	\arrow[from=2-5, to=3-5]
	\arrow[from=3-3, to=4-4]
	\arrow[from=3-4, to=4-4]
	\arrow[from=3-5, to=4-4]
	\arrow[from=3-4, to=4-5]
	\arrow[from=3-5, to=4-5]
	\arrow[from=2-7, to=3-7]
	\arrow[from=3-7, to=4-7]
	\arrow[from=2-1, to=3-1]
	\arrow[from=3-1, to=4-1]
\end{tikzcd}\]
  \caption{The examples $(I_q, y_0)$ and $(I_{q_T}, x_0)$ are path examples.
  The instance $D_q$ is the result of the construction in the proof of Lemma~\ref{lem:path-dual}.
  The pair $(\{ (I_q, y_0)\},\{(D_q, \langle x_0, R(y_0, y_1) \rangle)\} )$ is a homomorphism duality relative to the example $(I_{q_T}, x_0)$.}
  \label{fig:example}
\end{figure}

Figure~\ref{fig:example} shows an example of the construction employed in the
proof of Lemma~\ref{lem:path-dual}. Note that the result of this construction
$D_q$ is not itself a path instance, and not even a tree.
In what follows, we  use $p$ to refer to a polynomial function $p \colon
\mathbb{N}^2 \to  \mathbb{N}$ that bounds the size of the data example $(D, d)$,
depending on the size of the examples $(I, a_0)$ and $(J, b_0)$, constructed according to
Lemma~\ref{lem:path-dual}.
We now build on Lemma~\ref{lem:path-dual} to  show that algorithms that produce strongly most-general fittings are not
sample-efficient PAC learning algorithms.

\begin{theorem}\label{thm:most-general-not-sample-efficient}
  Let $\Amf$ be a fitting algorithm for CQs that always produces a strongly most-general 
  fitting, if it exists. Then $\Amf$ is not a sample-efficient PAC learning algorithm.
\end{theorem}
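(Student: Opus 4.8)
The plan is to construct, for each $n$, a family of example collections on which a strongly most-general fitting CQ exists (so that $\Amf$ must output it), but such that this most-general fitting is an arbitrarily poor hypothesis with respect to a suitable target CQ and distribution, even when $\Amf$ is given a polynomial number of examples. The key tool is Lemma~\ref{lem:path-dual}: the target $q_T$ and the positive-example content will both be path examples, so that the most-general fitting is governed by the polynomial-size dual $(D,d)$ constructed there, whose shape we understand explicitly.

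Concretely, I would fix a target CQ $q_T$ whose canonical example $(I_{q_T}, x_0)$ is a long path example of length $n$ (decorated with unary predicates so that shorter paths do not homomorphically map into it). Let the distribution $P$ be supported on two kinds of examples: with probability $1-\epsilon$ a trivial ``easy'' positive example whose direct-product behaviour forces the most-specific side to be essentially $I_{q_T}$ itself, and with probability $\epsilon$ a ``hard'' negative example $(J,b)$ — again path-shaped, and chosen so that $q_T$ correctly labels it negative but the strongly most-general fitting does not. Because $E^+$ will (with overwhelming probability over a polynomial sample) consist only of copies of the easy example, $\Pi_{e\in E^+}(e)$ is a fixed path example $(I,a_0)$, and by Theorem~\ref{thm:most-general-fitting-chara} the strongly most-general fitting $q_H$ exists iff $(\{e_{q_H}\}, E^-)$ is a homomorphism duality relative to $(I,a_0)$; Lemma~\ref{lem:path-dual} tells us exactly what the canonical dual $(D,d)$ looks like, and hence pins down $e_{q_H}$. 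The point is that $e_{q_H}$ is forced to be so general (its canonical instance $D$ being ``non-path, non-tree'' and branching) that it admits homomorphisms from structures $q_T$ rejects — in particular from the hard negative example — so $q_H$ mislabels a set of $P$-measure $\epsilon$, while $q_T$ labels everything correctly. Thus $\mn{error}_{P,q_T}(q_H) \geq \epsilon$ no matter how many examples $\Amf$ sees, as long as $E^-$ is small enough that the duality is still ``driven'' by the polynomial-size $(D,d)$ rather than by the sampled negatives.

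I would then quantify this against Definition~\ref{def:efflearn}: set $\delta = 1/2$, let $s_E$ be the (polynomial in $n$) bound on the example sizes, and let $s_Q = \|q_T\| = O(n)$. If $\Amf$ were sample-efficient with polynomial $m$, then on a sample of size $m(2, 1/\epsilon, |\Smc|, s_Q, s_E)$ it would have to return $q_H$ with $\mn{error}_{P,q_T}(q_H)\le\epsilon$ with probability $\ge 1/2$. But with high probability such a sample contains only easy positives and few hard negatives — few enough that the strongly most-general fitting is still the $q_H$ described above (here one uses that adding a negative example $(J',b')$ with $(J',b')\to(I,a_0)$ to $E^-$ only changes the most-general fitting if $q_H$ fails to reject it, and we will choose $\epsilon$ and the construction so that a polynomial number of i.i.d.\ hard negatives still leaves $q_H$ unchanged, or more simply: the hard negative example is a single fixed one already reflected in the duality, and the "error" event is that $q_H$ accepts it). Then $\Amf$ must output this fixed $q_H$, whose error is $\ge\epsilon$, contradicting the PAC guarantee. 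Letting $n\to\infty$ (so that no fixed polynomial $m$ can keep pace) yields the theorem.

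The main obstacle I anticipate is the bookkeeping in the second bullet: I must choose the hard negative example $(J,b)$ and the decorations on the path so that simultaneously (i) $q_T$ correctly rejects $(J,b)$, i.e. $(I_{q_T},x_0)\not\to(J,b)$; (ii) the strongly most-general fitting $q_H$ dictated by Lemma~\ref{lem:path-dual} does accept $(J,b)$, i.e. $(J,b)\to(D,d)$, which I expect to follow from the "non-path, branching" structure of $D$ highlighted after the lemma — essentially $D$ contains short "shortcut" homomorphic images that $q_T$'s long path lacks; and (iii) the construction still goes through after augmenting $E^-$ with the i.i.d.\ negatives the sample provides, so that $\Amf$ genuinely has no choice but to return this overly-general $q_H$. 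Getting (i) and (ii) to hold for the same gadget is the crux; the path-shaped setting and the explicit dual from Lemma~\ref{lem:path-dual} are exactly what make it tractable, mirroring the $\mathcal{EL}$-query argument of \cite{IJCAI23:SAT} but now with genuine homomorphism dualities rather than simulation dualities.
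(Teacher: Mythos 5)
Your high-level strategy — build the negatives from the duals of Lemma~\ref{lem:path-dual} and analyze the forced most-general fitting via Theorem~\ref{thm:most-general-fitting-chara} — points the right way, but as sketched the argument does not close, for a quantitative reason that is in fact the crux of the theorem. The ``single hard negative $(J,b)$ with probability $\epsilon$'' plan cannot work: the learner receives $m(\frac{1}{\delta},\frac{1}{\epsilon},\ldots)$ samples, a \emph{polynomial in $1/\epsilon$}, so $(J,b)$ is missed in all draws only with probability $(1-\epsilon)^m$, which is exponentially small, far below any constant $\delta$. Once $(J,b)\in E^-$, the output $q_H$ is a \emph{fitting} CQ and therefore correctly rejects $(J,b)$, so no error is charged; and if you lower the probability of $(J,b)$ so it is likely missed, its contribution to $\mn{error}_{P,q_T}(q_H)$ also drops below $\epsilon$. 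No fixed, constant-size collection of hard negatives escapes this tension. The paper's proof needs \emph{exponentially many} pairwise-incomparable negative examples $(D_q,a_q)$ (indexed by the $2^n$ single-label paths $q\in S$), each of probability $2^{-(n+1)}$, so that a polynomial sample can only reveal a vanishing fraction $S'$; the strongly most-general fitting is then the join $q_{S'}=\bigwedge_{q\in S'}q$, and by the duality relative to $(I_{q_T},x_0)$ it accepts \emph{every} unseen $(D_q,a_q)$, an error mass exceeding $1/4$. This ``one positive, exponentially many incomparable negatives, polynomial visibility'' amplification is the missing idea.

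There is also a duality-direction slip. You describe $e_{q_H}$ as ``pinned down'' to be the branching structure $D$ from Lemma~\ref{lem:path-dual} and attribute the over-acceptance to $D$ being non-tree. But in Theorem~\ref{thm:most-general-fitting-chara} the canonical example $e_{q_H}$ sits on the $\mathcal{F}$ side of the duality, while $E^-$ plays the role of $\mathcal{D}$: the $D$'s produced by Lemma~\ref{lem:path-dual} go into $E^-$, not into $e_{q_H}$. Here $e_{q_H}$ is the join of the observed path examples $(I_q,y_0)$, $q\in S'$ (a tree), and the over-acceptance arises because this join omits every unseen $(I_q,y_0)$ and hence, by the duality, maps into the corresponding $(D_q,a_q)$ — not because $q_H$ ``looks like $D$''. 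Finally, your sketch must also handle samples for which no strongly most-general fitting exists (e.g.\ no positive drawn), where $\Amf$ is unconstrained; the paper conditions on $(I_{q_T},x_0)\in E^+$, an event of probability at least $1-2^{-|E|}>1-\delta$.
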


\begin{proof}
  Assume to the contrary of what we aim to show that there is a sample-efficient
  PAC learning algorithm that produces a strongly most-general fitting CQ, if it exists,
  with associated polynomial function $m \colon \mathbb{R}^2\times \mathbb{N}^3 \rightarrow \mathbb{N}$
  as in Definition~\ref{def:efflearn}. We assume that the value of $m$ is at least 2, for any input. 

  Choose a signature $\mathcal{S}$ that contains the unary relation symbols $A$, $B$
  and a binary relation symbol $R$, $\delta = 0.5$, $\epsilon = 0.25$,
  and $n \in \mathbb{N}$ large enough so that
  $2^{n - 1} > m( \frac{1}{\delta}, \frac{1}{\epsilon}, |\mathcal{S}|, 3n, p(3n, 2n) )$.
  Without loss of generality we assume that $p(3n, 2n) \geq 3n$.

  As target CQ, we use
  \[
    q_T(x_0) \text{ :- } R(x_0, x_1), A(x_1), B(x_1), \dots, R(x_{n - 1}, x_n), A(x_n), B(x_n).
  \]
  Thus, $q_T$ is an $r$-path of length~$n$ in which every non-root node is labeled with $A$ and~$B$.
  We will use $(I_{q_T}, x_0)$ as a positive example for $q_T$. Note that
  $I_{q_T}$ is a path instance with $|I_{q_T}| = 3n \leq p(3n, 2n)$.

  Next, we construct instances that we use as negative examples.
  Define a set of CQs
  \[
    S = \{ q(y_0) \text{ :- } R(y_0, y_1), \alpha_1(y_1), \dots, R(y_{n - 1}, y_n), \alpha_n(y_n) \mid \alpha_i \in \{A, B\}\}.
  \]
  The CQs in $S$ resemble $q_T$, except that every node is labeled with only one
  of the unary relation symbols $A$ and $B$.
  For all elements $q$ of $S$ it holds that $q_T \subseteq q$ and $I_q$ is a path instance with $|I_q| = 2n$.
  In order to obtain the negative examples, we construct for each $q \in S$, the
  example $(D_q, a_q)$ such that $(\{(I_q, y_0)\}, \{(D_q, a_q)\})$ is a
  homomorphism duality relative to $(I_{q_T}, x_0)$.
  By definition of $p$, $|D_q| \leq p(3n, 2n)$.
  Since $(I_q, y_0) \to (I_{q_T}, x_0)$ for all $q \in S$, it follows from the definition of
  relativized homomorphism duality that $(I_{q_T}, x_0) \not \to (D_q, a_q)$.
  Hence all $(D_q, a_q)$ are negative examples for $q_T$.
  Figure~\ref{fig:example} shows examples of these instances for $n = 2$.

  The central properties of the chosen data examples are the following.
  For $q_1, q_2 \in S$, let $q_1 \land q_2$ denote the unary CQ that joins
  $q_1$ and $q_2$ at the answer variable $y_0$, but keeps all existential variables distinct.
  For $S' \subseteq S$, let $q_{S'} = \bigwedge_{q \in S} q$.

  \medskip\noindent\textit{Claim.}
  For every $S' \subseteq S$, $(\{(I_{q_{S'}}, y_0)\}, \{ (D_q, a_q) \mid q \in
  S'\})$ is a homomorphism duality relative to $(I_{q_T}, x_0)$.

  \smallskip\noindent\textit{Proof of claim.}
  Let $S'$ be a subset of $S$.
  Using the definition of homomorphism duality, we have to show that for all data 
  examples $(I, a)$ with $(I, a) \to (I_{q_T}, x_0)$, it holds that $(I_{q_{S'}}, y_0) \to (I,
  a)$ if and only if $(I, a) \not \to (D_q, a_q)$ for all $q \in S'$.

  Let $(I, a)$ be an example such that $(I, a) \to (I_{q_T}, x_0)$.
  First, we show that $(I_{q_S'}, y_0) \to (I, a)$ implies that $(I, a) \not\to
  (D_q, a_q)$ for all $q \in S'$. So assume that  $(I_{q_S'}, y_0) \to (I, a)$. 
 It suffices to show that $(I_{q_{S'}}, y_0) \not\to
  (D_q, a_q)$ for all $q \in S'$.
  Take any $q \in S'$. Since $(\{(I_q, y_0) \}, \{(D_q, a_q)\})$ is a homomorphism duality relative
  to $(I_{q_T}, x_0)$ and $(I_q, y_0)  \to (I_{q_T}, x_0)$, we have
  $(I_q, y_0)  \not\to (D_q, a_q)$. By construction of $q_{S'}$, this
  implies
   $(I_{q_{S'}}, y_0) \not\to (D_q, a_q)$.

  Next, we show that $(I_{q_S'}, y_0) \not\to (I, a)$ implies that 
  there is a $q \in S'$ such that $(I, a)  \to (D_q, a_1)$.
  Since $I_{q_{S'}}$ is the join of all $I_q$ with $q \in S'$ as the answer variable, 
  there must be a $q \in S'$ such that $(I_q, y_0) \not\to (I, a)$. 
  By definition of homomorphism dualities therefore
  $(I, a) \to (D_q, a_q)$, as required.
  This completes the proof of the claim.

  \medskip

  Let the probability distribution $P$ assign probability
  $\frac{1}{2}$ to the positive example $(I_{q_T}, x_0)$, probability
  $\frac{1}{2^{n + 1}}$ to all negative examples $(D_q, a_q)$, $q \in S$, and
  probability $0$ to all other data examples.
  Now assume that the algorithm is started on a collection of
  $m(\frac{1}{\delta},\frac{1}{\epsilon},|\Smc|,3n, p(3n, 2n))$
  data examples $E = (E^+, E^-)$ drawn according to~$P$ and labeled according to $q_T$. 
  Let $S' = \{ q \in S \mid (D_q, a_q)  \in E^-\} \subseteq S$.
  We argue that $q_{S'}$ fits $E$. Since $(I_{q}, y_0) \to (I_{q_T}, x_0)$ for all $q \in S$,
  it follows that $q_{S'}$ fits the positive examples in $E$ which are all of the form $(I_{q_T}, x_0)$.
  Now let $(D_q,a_q)$ be a negative example in $E$.
  We have $(I_{q}, y_0) \not\to (D_q, a_q)$ by the properties of
  homomorphism dualities and by
  construction of $q_{S'}$  
  it  follows that $q_{S'} \not\to (D_q, a_q)$, as required. %Hence, $q_{S'}$ fits $E$.
  %We make the following observation about $E$ and $q_{S'}$.

  We next observe that if $(I_{q_T}, x_0) \in E^+$, then it follows from
  Theorem~\ref{thm:most-general-fitting-chara} and the claim that $q_{S'}$ is a
  strongly most-general fitting of $E$. 
  Hence, with probability $1 - \frac{1}{2^{|E|}} > 1 - \delta$ (since the value of
  $m$ and thus $|E|$ is at least~2), the algorithm
  $\mathbf{A}$ returns a CQ equivalent to $q_{S'}$.

  We argue that $q_{S'}$ classifies all negative examples $(D_q, a_q)$ with $q \in
  S \setminus S'$ incorrectly. To see this, let $q$ be a CQ from $S \setminus S'$.
  Note that $(I_q, y_0) \not \to (I_{q_{S'}}, y_0)$ by construction of~$q_{S'}$.
  Then, it follows from $(\{(I_q, y_0)\}, \{D_q, a_q\})$ being a homomorphism
  duality relative to $(I_{q_T}, x_0)$ and $(I_{q_{S'}}, y_0) \to (I_{q_T}, a_{q_T})$
  that $(I_{q_{S'}}, y_0) \to (D_q, a_q)$. Therefore all $(D_q, a_q)$ with $q \in S \setminus S'$
  are positive examples for $q_{S'}$.

  Since $|S \setminus S'| = 2^n - |S'| > 2^n - 2^{n - 1}$ by choice of $n$ and
  each of the  negative examples $(D_q, a_q)$ with $q \in
  S \setminus S'$ has probability $\frac{1}{2^{n + 1}}$ to be drawn from $P$, 
  we have $\mn{error}_{P, q_T}(q_{S'}) > \frac{2^n - 2^{n - 1}}{2^{n + 1}} = 1/4 = \epsilon$, a contradiction.
\end{proof}

\section{Most-Specific Fittings Preclude Sample-Efficient PAC Learning}

We show that fitting algorithms that always produce a most-specific fitting CQ
cannot be sample-efficient PAC learning algorithms. Note that a most-specific
fitting CQ is always guaranteed to exist provided that a fitting CQ exists 
at all~\cite{pods2023:extremal}. The proof of the following result is a 
refinement of a similar proof given in \cite{IJCAI23:SAT} for tree-shaped CQs.

\begin{theorem}\label{thm:most-specific-is-not-sample-efficient}
Let $\Amf$ be a fitting algorithm for CQs that always produces a most-specific
fitting CQ. %, if it exists. 
Then $\Amf$ is not a sample-efficient PAC learning
algorithm.
\end{theorem}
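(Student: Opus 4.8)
The plan is to reduce, via Theorem~\ref{thm:most-specific-fitting-chara}, to an overfitting argument about direct products. Suppose for contradiction that $\Amf$ always produces a most-specific fitting CQ and is a sample-efficient PAC learning algorithm, with polynomial sample size $m$ as in Definition~\ref{def:efflearn}. By Theorem~\ref{thm:most-specific-fitting-chara}, on any input collection $E=(E^+,E^-)$ for which a fitting exists, \emph{every} CQ that $\Amf$ may output is equivalent to the canonical CQ $q_{E^+}$ of the direct product $\Pi_{e\in E^+}(e)$ (this holds regardless of any internal randomness of $\Amf$). In all the instances we feed to $\Amf$ every example will be \emph{positive} for the target query, so $E^-=\emptyset$, $\Amf(E)\equiv q_{E^+}$, and moreover $q_{E^+}\subseteq q'$ for every fitting $q'$ of $E$ (since $I_{q'}\to e$ for all $e\in E^+$ implies $I_{q'}\to\Pi_{e\in E^+}(e)$); in particular $\Amf(E)\subseteq q_T$, so the error of $\Amf(E)$ relative to $q_T$ is exactly the $P$-probability of positive examples of $q_T$ that $\Amf(E)$ rejects. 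It thus suffices to defeat the ``product-of-positives'' hypothesis.

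For each $n$ I would fix $\delta=\tfrac12$, $\epsilon=\tfrac18$, a schema $\Smc$ with $O(n)$ relation symbols, a target CQ $q_T$ with $\|q_T\|$ polynomial in $n$, and an exponentially large family $\{E_\alpha\mid\alpha\in\{0,1\}^n\}$ of data examples, each of size polynomial in $n$, with two properties: (i) every $E_\alpha$ is a positive example for $q_T$, i.e.\ $e_{q_T}\to E_\alpha$; and (ii) for every $S'\subseteq\{0,1\}^n$ with $|S'|<2^{n-1}$, the direct product $\Pi_{\alpha\in S'}(E_\alpha)$ fails to map homomorphically into $E_\beta$ for at least $2^{n-2}$ of the indices $\beta\notin S'$. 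Property~(ii) says that the product of the examples seen in a polynomial sample stays ``too specific'' to be satisfied by a constant fraction of the unseen ones; there is no tension here in principle, since the direct product of polynomially many polynomial-size examples may itself be of exponential size. I would then take $P$ to be the uniform distribution on $\{E_\alpha\mid\alpha\in\{0,1\}^n\}$ and choose $n$ large enough that $m(\tfrac1\delta,\tfrac1\epsilon,|\Smc|,\|q_T\|,\max_\alpha|E_\alpha|)<2^{n-1}$, which is possible because $|\Smc|$, $\|q_T\|$ and $\max_\alpha|E_\alpha|$ are polynomial in $n$ while $m$ is assumed to be a polynomial.

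With these choices, run $\Amf$ on a collection $E$ of $m(\dots)<2^{n-1}$ examples drawn from $P$ and labeled by $q_T$. All labels are positive, so $E^-=\emptyset$ and $E^+=\{E_\alpha\mid\alpha\in S'\}$, where $S'$ is the set of indices actually drawn and $|S'|\le|E|<2^{n-1}$. By Theorem~\ref{thm:most-specific-fitting-chara} the returned CQ is equivalent to $q_{E^+}$, whose canonical example is $\Pi_{\alpha\in S'}(E_\alpha)$; by property~(ii) it rejects at least $2^{n-2}$ of the positive examples $E_\beta$ with $\beta\notin S'$, each of which has probability $2^{-n}$ under $P$, and as noted these are genuine disagreements with $q_T$. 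Hence $\mn{error}_{P,q_T}(\Amf(E))\ge 2^{n-2}\cdot 2^{-n}=\tfrac14>\epsilon$ with probability $1$, contradicting the PAC guarantee. (If the eventual construction only guarantees~(ii) for a $1-o(1)$ fraction of the subsets $S'$ of the relevant size, the same conclusion holds with probability $\ge 1-\delta$.)

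The main obstacle is constructing the family $\{E_\alpha\}$ with property~(ii); this is exactly the step that refines the tree-shaped construction of~\cite{IJCAI23:SAT}, and, in contrast to the most-general case, it does not reduce to the polynomial path-dualities of Lemma~\ref{lem:path-dual}. The naive attempts fail: if each $E_\alpha$ consists of a common skeleton decorated, coordinate by coordinate, by a gadget selected by the bits of $\alpha$, then the direct product only retains the decorations that are present in \emph{all} factors, and for a typical (``diverse'') sample $S'$ there are none, so $\Pi_{\alpha\in S'}(E_\alpha)$ collapses to a structure homomorphically equivalent to $e_{q_T}$ and misclassifies nothing. The construction must instead ensure that the direct product of any polynomially bounded subfamily still fails to map into a constant fraction of the $E_\beta$ — intuitively, that the product ``remembers'' a global feature of the chosen subfamily that most individual examples lack — and making this precise by an explicit combinatorial family together with a homomorphism-theoretic analysis of its products is where the real technical work lies.
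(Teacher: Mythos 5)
Your high-level reduction is correct and matches the structure of the paper's argument: pick a target with only positive examples, exploit that a most-specific fitting algorithm must return (something equivalent to) the canonical CQ of the direct product of the observed examples, and then show this product ``overfits'' by rejecting a constant fraction of unseen positives. Your properties (i) and (ii) are exactly the right abstraction of what the example family must achieve. However, your proof stops precisely at the point where the real content lies: you observe that naive constructions collapse under direct products and explicitly defer the key combinatorial construction, so the argument is not complete.

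Moreover, the one concrete claim you make about that missing construction is incorrect: you assert that, ``in contrast to the most-general case, it does not reduce to the polynomial path-dualities of Lemma~\ref{lem:path-dual}.'' In fact the paper's construction for the most-specific case reduces to Lemma~\ref{lem:path-dual} just as crucially as the most-general case does. The trick is to take an antichain of path instances $I_S$ (paths of length $n$ where $A$-labels sit on the positions $i\in S$, for $S$ ranging over the $n/2$-subsets of $\{1,\dots,n\}$), and then use as data examples not the $I_S$ themselves but the polynomial-size \emph{duals} $J_S$ from Lemma~\ref{lem:path-dual} relative to the fully labeled path, each augmented with a root fact $A(a_0)$ so they become positive for $q_T(x_0)\colon{-}\;A(x_0)$. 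Duality flips the antichain: $I_S\not\to I_{S'}$ for $S\neq S'$ implies $I_S\to J_{S'}$, so \emph{every} $q_S$ fits any sample of the $J_{S'}$'s, hence the most-specific fitting $q_H$ satisfies $q_H\subseteq q_S$ for \emph{every} $S$ in the antichain, not just the sampled ones; combined with $I_S\not\to J_S$ (again by duality) this forces $q_H$ to reject all unseen $J_S$. This is exactly your property (ii), but it is delivered by dualizing, which sidesteps the ``product collapses'' obstacle you correctly worried about. Without this idea — or an equivalent one — the proof does not go through.
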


\begin{proof}
  Assume to the contrary of what
  we aim to show that $\Amf$ is a sample-efficient PAC learning algorithm with
  associated polynomial function
  $m \colon \mathbb{R}^2 \times \mathbb{N}^3 \rightarrow \mathbb{N}$ as in
  Definition~\ref{def:efflearn}. Choose a schema $\Smc$ that contains a unary relation symbol $A$ and a binary relation symbol $R$,  
  set $q_T(x_0) \text{ :- } A(x_0)$,
  $\delta=\epsilon = 0.5$, and $n$ even and large enough such that
  \[ 
    \frac{1}{2}\binom{n}{n/2} > m\left(\frac{1}{\epsilon},\frac{1}{\delta},|\Smc|,1, p(2n, 2n) + 1\right).
  \]
  We next construct positive examples for $q_T$; negative examples are not used.
  Let $N$ denote the set of subsets of $\{1,\dots,n\}$ and let
  $N^{\frac{1}{2}}$ be defined likewise, but include only sets of
  cardinality exactly $n/2$. With every $S \in N$, we associate
  the path instance
  \[
     I_S = \{ R(b_0,b_1),\dots,R(b_{n-1},b_n) \} \cup \{ A(b_i) \mid i \in S \}.
  \]
  as well as the example  $(I'_S,a_0)$ that can be obtained
  by constructing an example $(J_S, a_0)$ such that
  $(\{(I_S, b_0)\}, \{(J_S, a_0)\})$ is a homomorphism duality relative to
  $(I_{\{1, \dots, n\}}, b_0)$ via the construction in the proof of Lemma~\ref{lem:path-dual}
  and then adding the fact $A(a_0)$.
  Due to this additional fact, every $(I'_S, a_0)$ is a positive data example
  for $q_T$ with $|I'_S| \leq p(2n, 2n) + 1$.
  
  Let $P$ be the probability distribution that assigns
  probability~$1/|N^{\frac{1}{2}}|$ to every $(I'_S,a_0)$
  with $S \in N^{\frac{1}{2}}$, and probability~$0$ to all other
  examples.
  Now assume that $\Amf$ is started on a collection of
  $m(\frac{1}{\epsilon},\frac{1}{\delta},|\Smc|, 1, p(2n, 2n) + 1)$ 
  data examples~$E$ drawn from $P$ and labeled according to $q_T$, and let $q_H$
  be the CQ that is output by $\Amf$. Since $E$ contains no negative examples, 
  Theorem~\ref{thm:most-specific-fitting-chara} implies that a most-specific
  fitting CQ exists. By the properties of~$\Amf$, $q_H$ must therefore be be
  a most-specific CQ that fits $E$.

  We argue that $q_H$ labels all data examples $(I'_S,a_0)$ with $S \in
  N^{\frac{1}{2}}$ that are not in the sample $E$ incorrectly. 
  Let $S$ be any element of $N^{\frac{1}{2}}$ with $(I'_S, a_0) \notin E$.
  We aim to show that $(I_{q_H}, a_{q_H}) \not\to (I'_S, a_0)$.
  Let $q_S$ be the canonical CQ of $(I_S, b_0)$.
  Since $(\{(I_S, b_0) \}, \{ (J, a_0) \})$ is a homomorphism duality relative
  to  $(I_{1, \dots, n}, b_0)$, and $I'_S = J \cup \{A(a_0) \}$, $(I_S, b_0)
  \not \to (I'_S, a_0)$. Hence, it suffices to show that $(I_S, b_0) \to
  (I_{q_H}, a_{q_H})$ or equivalently that $q_H \subseteq q_S$.

  Recall that $q_H$ is a most-specific fitting for $E$. Hence, it 
  suffices to show that $q_S$ is a fitting for $E$.
  Let $S'$ be any element of $N^{\frac{1}{2}}$ with $(I'_{S'}, a_0) \in E$. 
  Then $S \neq S'$ and thus $(I_{S'}, b_0) \not \to (I_{S}, b_0)$ by construction.
  Since $(I_{S}, b_0)  \to (I_{\{1, \dots, n\}}, b_0)$
  and $(\{(I_{S'}, b_0)\}, \{(J_{S'}, a_0)\})$ is a homomorphism duality
  relative to $(I_{\{1, \dots n\}}, b_0)$, 
  it follows that
  $(I_{S}, b_0) \to (J_{S'}, a_0)$ and hence
  $(I_{S}, b_0) \to (I'_{S'}, a_0)$. Therefore $(I'_{S'}, a_0)$ is a positive
  example for $q_S$ and we are done.
  
  \smallskip
  The definition of $P$ and the choice of $n$ now yield that with probability $1 > 1 - \delta$,
  \[
    \mn{error}_{P,q_T}(q_H) = \frac{|N^{\frac{1}{2}}| - |E|}{|N^{\frac{1}{2}}|} > 0.5,
  \]
  a contradiction.
\end{proof}

\bibliographystyle{plain}
\bibliography{bib}

\end{document}